\newtheorem{theorem}{Theorem}
\begin{document}
%
\title{Optimized Measures of Bipartite Quantum Correlation}
%
%
%

\author{Joshua~Levin,~Graeme Smith \\ \emph{JILA, University of Colorado, Boulder}}
\maketitle

\begin{abstract}
  How can we characterize different types of correlation between quantum systems? Since correlations cannot be generated locally, we take any real function of a multipartite state which cannot increase under local operations to measure a correlation.   Correlation measures that can be expressed as an optimization of a linear combination of entropies are particularly useful, since they can often be interpreted operationally. We systematically study such optimized linear entropic functions, and by enforcing monotonicity under local processing we identify four cones of correlation measures for bipartite quantum states.  This yields two new optimized measures of bipartite quantum correlation that are particularly simple, which have the additional property of being additive.  
\end{abstract}


%
\IEEEpeerreviewmaketitle

\section{Introduction}
%
%
%
%
\IEEEPARstart{Q}{uantifying} the correlations between disjoint subsystems of a quantum state is a fundamental problem in quantum information theory. Since correlations cannot be generated by local operations, measures of correlation must be non-increasing under local processing. For measures which are functions of the von Neumann entropy ($S(\rho) = -\text{Tr}~\rho\log\rho$), this is equivalent to being non-increasing under partial trace \footnote{This is because any processing can be written as an isometry followed by a partial trace, and the isometry will not affect entropies.}, i.e. 
\begin{align}
E(\rho_{(A_1A_2)(B_1B_2)})&\geq E(\rho_{A_1B_1})\label{mon}
\end{align}
for a correlation measure $E$.

In this work, we will take (\ref{mon}) to be the defining property of a bipartite correlation measure.  This property has been studied by \cite{alhejji2018monotonicity} for linear entropic quantities, but here we wish to identify bipartite correlation measures formed by minimizing a linear entropic quantity over all purifications of a state $\rho_{AB}$. More formally, we will study the space of quantities of the form
\begin{align}
E_\alpha(\rho_{AB}) = \inf_{\psi:\text{Tr}_{A'B'}\dyad{\psi}_{AA'BB'} = \rho_{AB}}f^\alpha(\dyad{\psi}_{AA'BB'})\nonumber
\end{align} 
where $\alpha\in\mathbb{R}^{15}$ and
\begin{align}
f^\alpha(\dyad{\psi}_{AA'BB'}) = \sum_{\emptyset\neq\mathcal{J}\subseteq\{A,B,A',B'\}}\alpha_{\mathcal{J}}S_{\mathcal{J}}\nonumber
\end{align}
(each entry of $\alpha$ corresponds to a non-empty subset of $\{A,B,A',B'\}$), and identify instances which satisfy (\ref{mon}).  Quantities of this form are of particular interest, since they often admit operational interpretations, usually in the form of bounds on performance in information theoretic tasks. Examples include the squashed entanglement \cite{doi:10.1063/1.1643788}, the entanglement assisted capacity \cite{bennett2002entanglement}, and the entanglement of purification \cite{doi:10.1063/1.1498001}.


In a pure state, the entropy of any subsystem is equal to the entropy of its complement, so we can remove redundancy from our search space by rewriting it as the set of quantities of the form
\begin{align}
E_\alpha(\rho_{AB}) = \inf_{\rho_{ABV}:\text{Tr}_V\rho_{ABV}=\rho_{AB}} f^\alpha(\rho_{ABV})\label{form2}
\end{align}
where $\alpha \in \mathbb{R}^7$ and
\begin{align}
f^\alpha(\rho_{ABV}) = \sum_{\emptyset\neq\mathcal{J}\subseteq\{A,B,V\}}\alpha_{\mathcal{J}}S_{\mathcal{J}}.
\end{align}
Note that the minimization is now over all extensions $\rho_{ABV}$, not only purifications $\dyad{\psi}_{AA'BB'}$.

By first examining the entanglement of purification, a well-known instance of (\ref{form2}) which satisfies (\ref{mon}), we are led to the construction of four convex polyhedral cones in $\mathbb{R}^7$.  These four cones consist of $\alpha$ vectors which give rise to optimized bipartite correlation measures which all satisfy (\ref{mon}).  We examine the extreme rays of these cones and find four nontrivial rays, two of which are new.
We study these two correlation measures and find several useful properties, including lower and upper bounds, additivity, and a relationship to the regularized entanglement of purification.

This paper is organized as follows.  In Section II we present a proof of the monotonicity of the entanglement of purification, in order to illustrate our method for identifying monotones of the form (\ref{form2}).  Guided by the proof in Section II, in Section III we define two different types of monotonicity and identify all monotones of each type.  In Section IV, we examine the monotones found in Section III and find that many are trivial in the sense that they are equal to $I_{A:B}$ or $0$.  After identifying some nontrivial monotones in Section IV, in Section V we go on to prove several important properties of these monotones. 
 
Throughout this paper, for compactness of notation, we will denote all entropic quantities using subscripts.  The entropy of subsystem $A$ will be denoted $S_A$, the entropy of $A$ conditioned on $B$ ($\equiv S_{AB} - S_B$) will be denoted $S_{A|B}$, the mutual information of $A$ and $B$ ($\equiv S_A + S_B - S_{AB}$) will be denoted $I_{A:B}$, and the mutual information of $A$ and $B$ conditioned on $V$ ($\equiv S_{AV} + S_{BV} - S_{ABV} - S_V$) will be denoted $I_{A:B|V}$.

\section{The entanglement of purification}

A well-known example of an optimized bipartite correlation measure is the entanglement of purification \cite{doi:10.1063/1.1498001}
\begin{align}
E_P(\rho_{AB}) = \inf_{\rho_{ABV}:\text{Tr}_V\rho_{ABV}=\rho_{AB}} S_{AV},\nonumber
\end{align}
i.e. $\alpha_{AV} = 1$ and $\alpha_\mathcal{J} = 0$ for all other $\mathcal{J}$. In this section we prove that (\ref{mon}) holds for $E_\alpha = E_P$, i.e. that $E_P$ is monotonically non-increasing under local processing of both subsystems of a bipartite state\footnote{This was first shown by \cite{doi:10.1063/1.1498001}, using a different method from the one used here.}. The proofs for monotocity under $A$-processing and $B$-processing are different, and point towards a method for identifying instances of (\ref{form2}) which satisfy (\ref{mon}).

First we show that $E_P(\rho_{AB})$ is monotone under $B$-processing.  For each extension $\rho_{AB_1B_2V}$ of $\rho_{AB_1B_2}$, we will construct an extension of $\rho_{AB_1}$ whose value of $S_{AV}$ is no greater than that of $\rho_{AB_1B_2V}$.  Given the extension $\rho_{AB_1B_2V}$ of $\rho_{AB_1B_2}$, consider the state
\begin{align}
\rho_{AB_1V}' = \text{Tr}_{B_2}\left[\rho_{AB_1B_2V}\right],\nonumber
\end{align}
which is an extension of the state $\rho_{AB_1}$.  Now note that 
\begin{align}
S_{AV}'\equiv S\left(\text{Tr}_{B_1}\left[\rho_{AB_1V}'\right]\right) = S\left(\text{Tr}_{B_1B_2}\left[\rho_{AB_1B_2V}\right]\right)\equiv S_{AV}.\nonumber
\end{align}
Therefore, every value of $S_{AV}$ achievable by an extension of the unprocessed state $\rho_{AB_1B_2}$ can also be achieved by an extension of the processed state $\rho_{AB_1}$. Thus, the minimum value of $S_{AV}$ for extensions of the processed state is no greater than the minimum value of $S_{AV}$ for extensions of the unprocessed state, which is exactly the statement that
\begin{align}
E_P(\rho_{AB_1B_2})\geq E_P(\rho_{AB_1}).\nonumber
\end{align}
This was a roundabout way of saying that $E_P(\rho_{AB})$ is monotone under $B$-processing because $S_{AV}$ itself (without the minimization) is monotone under $B$-processing.  

Now we show that $E_P(\rho_{AB})$ is monotone under $A$-processing.  Our method is the same, i.e., for each extension $\rho_{A_1A_2BV}$ of $\rho_{A_1A_2B}$, we construct an extension of $\rho_{A_1B}$ whose value of $S_{AV}$ is no greater than that of $\rho_{A_1A_2BV}$.  Given the extension $\rho_{A_1A_2BV}$ of $\rho_{A_1A_2B}$, consider the state $\rho'_{\hat{A}B\hat{V}}=\rho_{A_1B(A_2V)}'$, where $\hat{A} = A_1$ and $\hat{V} = A_2V$. This is the same global state but written as an extension of the processed state $\rho_{A_1B}$.  Now note that
\begin{align}
S_{\hat{A}\hat{V}}'\equiv S\left(\text{Tr}_B[\rho_{A_1B(A_2V)}']\right) = S\left(\text{Tr}_B\left[\rho_{A_1A_2BV}\right]\right)\equiv S_{AV},\nonumber
\end{align}
so we have shown that
\begin{align}
E_P(\rho_{A_1A_2B})\geq E_P(\rho_{A_1B}),\nonumber
\end{align}
which completes the proof of inequality (\ref{mon}).

\section{Monotones}

\subsection{Monotonicity types}

The main point to take away from the previous section is that for quantities of the form (\ref{form2}), there are two types of monotonicity we can identify.  One way for a quantity $E_\alpha$ to be monotonic under processing of a subsystem $X\in \{A,B\}$ is for the associated $f^\alpha$ to be monotonic under processing of $X$.  In this case, monotonicity of $E_\alpha$ is proved by starting with an extension of an unprocessed state and constructing from it an extension of an $X$-processed state by simply tracing out a subsystem $X_2$ of $X=X_1X_2$, as in the above proof of monotonicity of $E_P$ under $B$-processing.  The monotonicity of $f^\alpha$ then implies the monotonicity of $E_\alpha$.   This type of monotonicity (under, say, $A$-processing) is therefore characterized by the inequality
\begin{align}
f^\alpha(\rho_{A_1A_2BV})\geq f^\alpha(\rho_{A_1BV}).\label{ineq0}
\end{align}

But, as we saw for $E_P$, monotonicity of $f^\alpha$ is not necessary for monotonicity of $E_\alpha$.  All that is necessary is for $f^\alpha$ to be monotonic under some operation which constructs an extension of a processed state from an extension of an unprocessed state.  One such operation is a rearrangement of the subsystems making up the unprocessed state, as in the proof of monotonicity of the $E_P$ under $A$-processing.  In this case, monotonicity of $E_\alpha$ under $A$-processing is implied by monotonicity of $f^\alpha$ under the operation $\rho_{A_1A_2BV}\to\rho_{A_1B(A_2V)}$, i.e. placing $A_2$ with $V$ in order to write the state as an extension of the processed state $\rho_{A_1B}$.  This type of monotonicity, again under $A$-processing, is therefore characterized by the inequality
\begin{align}
f^\alpha(\rho_{A_1A_2BV})\geq f^\alpha(\rho_{A_1B(A_2V)}).\label{ineq1}
\end{align}
We will refer to monotonicity of the types characterized by (\ref{ineq0}) and (\ref{ineq1}) as 0-monotonicity and 1-monotonicity, respectively.  A quantity $E_\alpha$ can now be monotonic under both $A$- and $B$-processing in four (not mutually exclusive) ways that we can identify.  These quantities can be 00-, 01-, 10-, or 11-monotonic, where the first bit indicates whether $E_\alpha$ is 0- or 1-monotonic on $A$, and the second on $B$.  As an example, we have shown $E_P$ to be 10-monotonic.

Note that there is still a redundancy in the $\alpha$ vectors, due to a purification symmetry.  Given an extension $\rho_{ABV}$ of $\rho_{AB}$, we can form a canonical dual extension by purifying to $\rho_{ABVW}$, and tracing out $V$ to form $\rho_{ABW}$.  Now, given $f^\alpha$, there exists $f^\beta$ for which $f^\alpha(\rho_{ABV}) = f^\beta(\rho_{ABW})$ (implying $E_\alpha = E_\beta$). Using the fact that entropies of complimentary subsystems are equal in a pure state, we see that $\beta_{AV} = \alpha_{BV}$, $\beta_{BV} = \alpha_{AV}$, $\beta_{ABV} = \alpha_V$, and $\beta_V = \alpha_{ABV}$.  Also note that this symmetry takes 0-monotones to 1-monotones, and vice-verse. To see this, observe that under the purification symmetry, the operations defining 0- and 1-monotonicity ($\rho_{A_1A_2BV}\to\rho_{A_1BV}$ and $\rho_{A_1A_2BV}\to\rho_{A_1B(A_2V)}$, respectively) become
\begin{align}
&0:~~\rho_{A_1A_2BV}\to\rho_{A_1A_2BVW}\to\rho_{A_1BV(A_2W)}\to \rho_{A_1B(A_2W)}\nonumber
\\&1:~~\rho_{A_1A_2BV}\to\rho_{A_1A_2BVW}\to\rho_{A_1B(A_2V)W}\to\rho_{A_1BW}.\nonumber
\end{align}   
This means we need only study the 00- and 10-monotones, since the 11- and 01-monotones are redundant via the purification symmetry.

\subsection{Monotonicity cones}

Expanding (\ref{ineq0}) and (\ref{ineq1}) in terms of the coefficients $\alpha_{\mathcal{J}}$ and moving all terms to one side, we see that quantities $E_\alpha$ which are 0- or 1-monotonic on $A$ are those for which $\alpha$ satisfies
\begin{multline}
\alpha_AS_{A_2|A_1} + \alpha_{AB}S_{A_2|A_1B}
\\ + \alpha_{AV}S_{A_2|A_1V} + \alpha_{ABV}S_{A_2|A_1BV}\geq 0\stepcounter{equation}\tag{\arabic{equation}-0A}\label{0A}
\end{multline}
or
\begin{multline}
\alpha_AS_{A_2|A_1} + \alpha_{AB}S_{A_2|A_1B} 
\\- \alpha_{BV}S_{A_2|BV} - \alpha_VS_{A_2|V} \geq 0,\tag{\arabic{equation}-1A}\label{1A}
\end{multline}
respectively.  Here $S_{A|B} = S_{AB} - S_B$ is the conditional entropy. Swapping the roles of $A$ and $B$ in inequalities(\ref{0A}) and (\ref{1A}) gives inequalities 
\begin{multline}
\alpha_BS_{B_2|B_1} + \alpha_{AB}S_{B_2|B_1A} 
\\+ \alpha_{BV}S_{B_2|B_1V} + \alpha_{ABV}S_{B_2|B_1AV}\geq 0\tag{\arabic{equation}-0B}\label{0B}
\end{multline}
and
\begin{multline}
\alpha_BS_{B_2|B_1} + \alpha_{AB}S_{B_2|B_1A} 
\\- \alpha_{AV}S_{B_2|AV} - \alpha_VS_{B_2|V} \geq 0,\tag{\arabic{equation}-1B}\label{1B}
\end{multline}
satisfied by those $\alpha$ for which $E_\alpha$ is 0- or 1-monotonic on $B$, respectively.  The set of all $\alpha\in\mathbb{R}^7$ for which (\ref{0A}) or (\ref{1A}) is implied by strong subadditivity (SSA) ($I_{A:B|C}\geq 0$) \cite{doi:10.1063/1.1666274, pippenger2003inequalities} and weak monotonicity (WM) ($S_{C|A} + S_{C|B}\geq 0$) \cite{doi:10.1063/1.1666274} of the von Neumann entropy, for any 4-partite state $\rho_{A_1A_2BV}$, form convex polyhedral cones in $\mathbb{R}^7$.  Similarly, the set of all $\alpha\in\mathbb{R}^7$ for which (\ref{0B}) or (\ref{1B}) is implied by SSA and WM for any 4-partite state $\rho_{AB_1B_2V}$ also form convex polyhedral cones.  Since the intersection of two convex polyhedral cones is a convex polyhedral cone, the set of all 00-, 10-, 01-, and 11-monotonic quantities (i.e., those $\alpha$ which satisfy, respectively, (\ref{0A}) and (\ref{0B}),  (\ref{1A}) and (\ref{0B}), (\ref{0A}) and (\ref{1B}), (\ref{1A}) and (\ref{1B})) each form a convex polyhedral cone.  Using SAGE's\footnote{SAGE is a Python-based open-source mathematics software available at www.sagemath.org} rational convex polyhedral cone module, together with the constraints on entropy vectors implied by SSA, one can determine that the 00- and 10-cones are generated by the extreme rays given by the rows of Table \ref{tab:table1}.
\begin{table}[h]
\begin{center}
\caption{Rows are the extreme rays of the 00- and 10-monotone cones in $\mathbb{R}^7$.}
\label{tab:table1}
\begin{tabular}{c|c|c|c|c|c|c|c}
\text{Cone} & $\alpha_A$ & $\alpha_B$ & $\alpha_V$ & $\alpha_{AB}$ & $\alpha_{AV}$ & $\alpha_{BV}$ & $\alpha_{ABV}$
\\\hline
00 & 1 & 1 & 0 & -1 & 0 & 0 & 0
\\~& 1 & 0 & 0 & 0 & -1 & 0 & 0
\\~& 0 & 0 & 0 & 0 & 1 & 1 & -1
\\~& 0 & 0 & 0 & 1 & 0 & 0 & -1
\\~& 0 & 1 & 0 & 0 & 0 & -1 & 0
\\~& 0 & 0 & 1 & 0 & 0 & 0 & 0
\\~& 0 & 0 & -1 & 0 & 0 & 0 & 0
\\\hline
10 & 1 & 1 & 0 & -1 & 0 & 0 & 0
\\~& 0 & 0 & -1 & 0 & 0 & 1 & -1
\\~& 1 & 0 & -1 & 0 & 0 & 0 & 0
\\~& 0 & 1 & 0 & 0 & 0 & 0 & -1
\\~& 1 & 1 & 0 & 0 & 0 & -1 & 0
\\~& 0 & 0 & -1 & 1 & 0 & 0 & -1
\\~& 0 & 0 & 0 & 0 & 1 & 0 & 0
\\~& 0 & 0 & 0 & 0 & -1 & 0 & 0
\end{tabular}
\end{center}
\end{table}

\subsection{Non-negativity in $V$}

For certain $\alpha$, $E_\alpha$ is $-\infty$.  If, for some $\alpha$,
\begin{align}
\sum_{\substack{\mathcal{J}\subseteq\{A,B,V\}\\V\in\mathcal{J}}} \alpha_{\mathcal{J}}<0,\label{balance}
\end{align}
then we can achieve an arbitrarily large negative value of $f^\alpha$ for any state $\rho_{AB}$ by choosing an extension of the form $\rho_{ABV} = \rho_{AB}\otimes\mathbb{I}_k/k$, for sufficiently large $k$.  So for $\alpha$ satisfying (\ref{balance}), $E_\alpha$ is $-\infty$.  Therefore we are only interested in those $\alpha$ which satisfy
\begin{align}
\sum_{\substack{\mathcal{J}\subseteq\{A,B,V\}\\V\in\mathcal{J}}} \alpha_{\mathcal{J}}\geq 0,\nonumber
\end{align}
or equivalently, those $\alpha$ which satisfy
\begin{align}
\alpha\cdot (0,0,1,0,1,1,1)\geq 0.\label{balance2}
\end{align}
The set of all $\alpha$ satisfying (\ref{balance2}) form another convex cone $\mathcal{C}$ in $\mathbb{R}^7$, in fact they form the halfspace whose boundary is the plane through the origin with normal vector $(0,0,1,0,1,1,1)$.  Now we can intersect each of the three cones shown in Table \ref{tab:table1} with the cone $\mathcal{C}$, in order to keep only those $\alpha$ satisfying (\ref{balance2}).  The resulting cones (also obtained via SAGE's rational convex polyhedral cone module) are given by the extreme rays in Table \ref{tab:table2}.
\begin{table}[h]
\begin{center}
\caption{Rows are the extreme rays of the cones formed by intersecting the cones of 00- and 10-monotones with the cone $\mathcal{C}$ of vectors $\alpha\in\mathbb{R}^7$ which are non-negative in $V$ (i.e. satisfy (\ref{balance2})).}
\label{tab:table2}
\begin{tabular}{c|c|c|c|c|c|c|c||c}
\text{Cone} & $\alpha_A$ & $\alpha_B$ & $\alpha_V$ & $\alpha_{AB}$ & $\alpha_{AV}$ & $\alpha_{BV}$ & $\alpha_{ABV}$ & \text{label}
\\\hline
$\mathcal{C}\cap 00$ & 0 & 0 & 1 & 0 & 0 & 0 & 0 & 1
\\~& 1 & 1 & 0 & -1 & 0 & 0 & 0 & 2
\\~& 0 & 0 & -1 & 0 & 1 & 1 & -1 & 3
\\~& 0 & 0 & 1 & 1 & 0 & 0 & -1 & 4
\\~& 0 & 1 & 1 & 0 & 0 & -1 & 0 & 5
\\~& 1 & 0 & 1 & 0 & -1 & 0 & 0 & 6
\\\hline
$\mathcal{C}\cap 10$ & 0 & 0 & 0 & 0 & 1 & 0 & 0 & 7
\\~& 1 & 1 & 0 & -1 & 0 & 0 & 0 & 8
\\~& 0 & 0 & -1 & 0 & 1 & 1 & -1 & 9
\\~& 1 & 1 & 0 & 0 & 1 & -1 & 0 & 10
\\~& 0 & 0 & -1 & 1 & 2 & 0 & -1 & 11
\\~& 0 & 1 & 0 & 0 & 1 & 0 & -1 & 12
\\~& 1 & 0 & -1 & 0 & 1 & 0 & 0 & 13
\end{tabular}
\end{center}
\end{table}

\section{A closer look at the monotone cones}

The extreme rays of a convex polyhedral cone generate the cone via conical combinations (real linear combinations with non-negative coefficients), so any conical combination of the extreme rays of one of the two cones above gives a monotonic $E_\alpha$\footnote{Since the infimum of a conic combination is not generally equal to the conic combination of infima, the extreme rays are somewhat less priveleged in the optimized setting.  In other words there may be interesting quantities in these cones, other than the ones given by the extreme rays.  In this paper we do not discuss these quantities, but study the extreme rays as a starting point.}.  But in some cases these quantities can be trivial.  We will see that many of the rays in the cones in Table \ref{tab:table2} give $E_\alpha=0$, or $E_\alpha\propto I_{A:B}$.

We first examine $\mathcal{C}\cap 00$.  Extreme ray 1 gives $f^\alpha = S_V$, which is non-negative and achieves the value 0 for any $\rho_{AB}$ via the trivial extension.  So for this $\alpha$, we have $E_\alpha=0$.  Ray 2 gives $f^\alpha = I_{A:B}$, which gives $E_\alpha = I_{A:B}$ and the minimum is achieved by any extension.  Rays 4, 5, and 6 give, respectively, $f^\alpha = I_{AB:V}$, $f^\alpha = I_{B:V}$, and $f^\alpha = I_{A:V}$, which are non-negative and achieve the value 0 for any $\rho_{AB}$ via the trivial extension.  So for these three $\alpha$'s we also have $E_\alpha=0$.  So the only extreme ray of $\mathcal{C}\cap 00$ which is not minimized by the trivial extension and does not have $E_\alpha=0$ or $E_\alpha=I_{A:B}$ is ray 3, which gives $E_\alpha(\rho_{AB}) = E_{sq}(\rho_{AB})$ (the squashed entanglement \cite{doi:10.1063/1.1643788}, $E_{sq}(\rho_{AB}) = \inf_{\rho_{ABV}}(I_{A:B|V})$).  So any ray in $\mathcal{C}\cap 00$ which can be written as a conical combination of extreme rays that does not include the $E_{sq}$ ray will give $E_\alpha(\rho_{AB}) \propto I_{A:B}$ and is therefore trivial.  

$\mathcal{C}\cap 10$ is where we will find an abundance of nontrivial quantities.  There are only three trivial extreme rays, and they cannot be simultaneously minimized as in the two previous cones. Ray 8 is equal to rays 15 and 2, and again gives $E_\alpha = I_{A:B}$.  Ray 12 gives $f^\alpha = I_{B:AV}$, which SSA implies is bounded below by $I_{A:B}$, and achieves the value $I_{A:B}$ via the trivial extension.  Ray 13 gives $f^\alpha = S_A + S_{AV} - S_V$, which WM implies is bounded below by $I_{A:B}$, and achieves the value $I_{A:B}$ via any purification of $\rho_{AB}$. 

The four remaining extreme rays of $\mathcal{C}\cap 10$ are non-trivial.  Ray 9 is $E_{sq}$ which, interestingly, appears in all four monotonicity cones.  Rays 7, 10, and 11 are (up to a scaling by 1/2, the reason for which will be clear in the next section)
\begin{align}
f^{P} & = S_{AV}\nonumber
\\f^{Q} &= \frac{1}{2}(S_A + S_B + S_{AV} - S_{BV})\nonumber
\\f^{R} &= \frac{1}{2}(S_{AB} + 2S_{AV} - S_{ABV} - S_V),\nonumber
\end{align}
respectively. $f^P$ gives $E_P$, which we expected to find.  $E_Q$ and $E_R$ are new, and we will see that $E_Q$ and $E_R$ have several useful properties.
\section{Properties of $E_Q$ and $E_R$}

\subsection{Lower and upper bounds}

\begin{theorem}\footnote{This was proven for $E_P$ in \cite{doi:10.1063/1.1498001}}
$E_Q$ and $E_R$ satisfy 
\begin{align}
\frac{1}{2}I_{A:B}\leq E(\rho_{AB}) &\leq \min\{S_A,S_B\}.\label{bounds}
\end{align}
\end{theorem}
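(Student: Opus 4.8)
The plan is to prove the lower and upper bounds separately.

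For the upper bound, I would use the fact that $E_Q$ and $E_R$ are infima of $f^Q$ and $f^R$ over all extensions $\rho_{ABV}$, so it is enough to exhibit, for each of $f^Q$ and $f^R$, one extension attaining the value $S_A$ and one attaining $S_B$. Taking $V$ one-dimensional (the trivial extension) gives $S_V=0$, $S_{AV}=S_A$, $S_{BV}=S_B$, $S_{ABV}=S_{AB}$, and substituting into $f^Q$ and $f^R$ both expressions collapse to $S_A$. Taking instead $V=R$ with $|\psi\rangle_{ABR}$ a purification of $\rho_{AB}$ gives $S_{ABV}=0$ and, since complementary subsystems of a pure state have equal entropy, $S_V=S_{AB}$, $S_{AV}=S_B$, $S_{BV}=S_A$; substituting, $f^Q$ and $f^R$ both collapse to $S_B$. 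Hence $E_Q,E_R\le\min\{S_A,S_B\}$ (the $E_P$ case being the cited result).

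For the lower bound, I would show that $f(\rho_{ABV})\ge\tfrac12 I_{A:B}$ for \emph{every} extension $\rho_{ABV}$ and then take the infimum over extensions. This amounts to checking that $2f-I_{A:B}$ is non-negative pointwise, which I would establish through the identities
\begin{align}
2f^Q(\rho_{ABV})-I_{A:B} &= I_{B:V}+S_{A|B}+S_{A|V},\nonumber\\
2f^R(\rho_{ABV})-I_{A:B} &= I_{B:V|A}+S_{A|B}+S_{A|V},\nonumber
\end{align}
obtained by direct expansion of the definitions. In the first line, $I_{B:V}\ge0$ is subadditivity and $S_{A|B}+S_{A|V}\ge0$ is weak monotonicity; in the second line, $I_{B:V|A}\ge0$ is strong subadditivity and $S_{A|B}+S_{A|V}\ge0$ is again weak monotonicity. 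Both right-hand sides are therefore non-negative, so $f^Q,f^R\ge\tfrac12 I_{A:B}$ on every extension, and passing to the infimum gives $E_Q,E_R\ge\tfrac12 I_{A:B}$.

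The content of the argument is the bookkeeping required to find the two decompositions of $2f-I_{A:B}$ into a single strong-subadditivity-or-subadditivity term plus a single weak-monotonicity term; once these are written down, the inequalities follow immediately from SSA and WM. The only point needing a moment of thought is the choice of the two extensions used for the upper bound — the trivial extension for the value $S_A$ and a purification of $\rho_{AB}$ for the value $S_B$ — but no real obstacle arises.
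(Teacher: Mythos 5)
Your proposal is correct and follows essentially the same route as the paper: the trivial extension and a purification for the upper bound, and the decomposition $2f-I_{A:B}=(\text{SSA or subadditivity term})+S_{A|B}+S_{A|V}$ for the lower bound (your $I_{B:V}+S_{A|B}+S_{A|V}$ identity for $f^Q$ is just the paper's subadditivity-then-WM chain written as an exact identity, and your $f^R$ decomposition is literally the paper's). Both identities check out by direct expansion, so there is nothing to add.
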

\begin{proof}
We start with the upper bound. Both $f^{Q}$ and $f^{R}$ achieve a value of $S_A$ via the trivial extension, and a value of $S_B$ via any purification.  Therefore $E_Q$ and $E_R$ satisfy the upper bound.  To prove the lower bound in (\ref{bounds}) for $E_Q$, observe that
\begin{multline}
2f^{Q} - I_{A:B} = S_{AV} + S_{AB} - S_{BV}
\\\geq S_{AV} + S_{AB} - S_B - S_V = S_{A|V} + S_{A|B}\geq 0,\nonumber
\end{multline}
where the first inequality follows from subadditivity and the second from WM.  Therefore $E_Q$ satisfies the lower bound.  To prove the lower bound for $E_R$, observe that
\begin{multline}
2f^{R} - I_{A:B} = (S_{AB} + S_{AV} - S_{ABV} - S_A) 
\\+ (S_{AB} + S_{AV} - S_V - S_B) 
\\ = I_{B:V|A} + S_{A|B} + S_{A|V}\geq I_{B:V|A}\geq 0,\nonumber
\end{multline}
where the first inequality follows from WM and the second from SSA.  Therefore $E_R$ satisfies the lower bound. 
\end{proof}

\subsection{Additivity}  

\begin{theorem}\label{Add}
$E_Q$ and $E_R$ are additive\footnote{$E_P$ is believed to be non-additive \cite{chen2012non}.}, i.e.
\begin{align}
E(\rho_{A_1B_1}\otimes\rho_{A_2B_2})= E(\rho_{A_1B_1}) + E(\rho_{A_2B_2}).\nonumber
\end{align}
\end{theorem}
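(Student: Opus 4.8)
The plan is to prove additivity as two separate inequalities for each of $E\in\{E_Q,E_R\}$: \emph{subadditivity}, $E(\rho_{A_1B_1}\otimes\rho_{A_2B_2})\le E(\rho_{A_1B_1})+E(\rho_{A_2B_2})$, and \emph{superadditivity}, $E(\rho_{A_1B_1}\otimes\rho_{A_2B_2})\ge E(\rho_{A_1B_1})+E(\rho_{A_2B_2})$. By (\ref{bounds}) both quantities are finite, so $\pm\infty$ never arises. Subadditivity is routine: tensoring an extension $\rho_{A_1B_1V_1}$ of $\rho_{A_1B_1}$ with an extension $\rho_{A_2B_2V_2}$ of $\rho_{A_2B_2}$ produces an extension of the product state with $A=A_1A_2$, $B=B_1B_2$, $V=V_1V_2$, and since every entropy occurring in $f^Q$ and $f^R$ is additive under tensor product, the functional on this product extension is exactly the sum of the two single-copy functionals; infimizing over $V_1$ and over $V_2$ independently gives subadditivity.

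For superadditivity the idea is to start from an arbitrary extension $\rho_{A_1A_2B_1B_2V}$ of the product state and fabricate from it a specific extension of $\rho_{A_1B_1}$ and a specific extension of $\rho_{A_2B_2}$ whose single-copy functionals sum to no more than $f(\rho_{A_1A_2B_1B_2V})$; infimizing over $V$ then closes the argument. The only real decision is which ``reduced'' environment to assign to each factor. For $E_Q$ I would take $V_1:=A_2V$ (with state $\text{Tr}_{B_2}\rho_{A_1A_2B_1B_2V}$) as the environment of $\rho_{A_1B_1}$, and $V_2:=B_1V$ (with state $\text{Tr}_{A_1}\rho_{A_1A_2B_1B_2V}$) as the environment of $\rho_{A_2B_2}$; both are legitimate extensions because the product input forces the marginals on $A_1B_1$ and on $A_2B_2$ to be $\rho_{A_1B_1}$ and $\rho_{A_2B_2}$. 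Writing $2f^Q$ out for each reduced extension and using $S_{A_1A_2}=S_{A_1}+S_{A_2}$ and $S_{B_1B_2}=S_{B_1}+S_{B_2}$ (valid for a product input), the two ``cross'' terms $-S_{B_1A_2V}$ and $+S_{A_2B_1V}$ are the same entropy and cancel, so the sum of the two single-copy functionals equals $f^Q(\rho_{A_1A_2B_1B_2V})$ \emph{exactly}; hence $f^Q(\rho_{A_1A_2B_1B_2V})\ge E_Q(\rho_{A_1B_1})+E_Q(\rho_{A_2B_2})$, and the infimum over $V$ finishes. Notably this direction needs nothing beyond additivity of marginal entropies.

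For $E_R$ I would instead assign $V_1:=A_2V$ to $\rho_{A_1B_1}$ and $V_2:=A_1V$ to $\rho_{A_2B_2}$ (again honest extensions). Expanding $2f^R$ for the two reduced extensions and comparing with $2f^R(\rho_{A_1A_2B_1B_2V})$ (using $S_{A_1B_1A_2B_2}=S_{A_1B_1}+S_{A_2B_2}$), the required bound reduces to the single inequality $S_{A_1A_2B_1V}+S_{A_1A_2B_2V}+S_{A_1V}+S_{A_2V}\ge 2S_{A_1A_2V}+S_{A_1A_2B_1B_2V}+S_V$, which is precisely the sum of the two strong-subadditivity instances $I_{B_1:B_2|A_1A_2V}\ge 0$ and $I_{A_1:A_2|V}\ge 0$. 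Thus $f^R(\rho_{A_1A_2B_1B_2V})\ge E_R(\rho_{A_1B_1})+E_R(\rho_{A_2B_2})$ for every extension $V$, and the infimum over $V$ completes the proof.

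The only genuinely nonroutine step --- where I expect to spend the effort --- is finding the right reduced environments $V_1,V_2$ in the superadditivity argument; once they are in hand, what remains is bookkeeping with entropy identities plus, for $E_R$, two applications of SSA. A worthwhile sanity check along the way is that the $V_1,V_2$ so defined really are extensions with the correct reduced states, which is immediate from the product form of the input. It would be natural, though not needed here, to ask whether a single family of reduced environments can be made to work simultaneously for every extreme ray of the $\mathcal{C}\cap 10$ cone.
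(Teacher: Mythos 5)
Your proposal is correct and follows essentially the same route as the paper: the easy direction by tensoring extensions, and the hard direction by choosing reduced environments $V_1=A_2V$, $V_2=B_1V$ for $E_Q$ (exact cancellation of the cross terms) and $V_1=A_2V$, $V_2=A_1V$ for $E_R$ (reducing to $I_{A_1:A_2|V}+I_{B_1:B_2|A_1A_2V}\geq 0$ via SSA). The paper packages the $E_R$ step as $I_{A_1:A_2|V}+I_{B_1:B_2|A_1A_2V}\geq I_{A_1B_1:A_2B_2}$ with the right-hand side vanishing for product inputs, but this is the same inequality you derived.
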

\begin{proof} For brevity we will exclude the factor of 1/2. Fix two bipartite states $\rho_{A_1B_1}$ and $\rho_{A_2B_2}$, and form the bipartite state $\rho_{AB} = \rho_{A_1B_1}\otimes\rho_{A_2B_2}$ with $A=A_1A_2$ and $B=B_1B_2$. We start with $E_Q$. 
First we show that 
\begin{align}
E_Q(\rho_{A_1B_1}\otimes\rho_{A_2B_2})\geq E_Q(\rho_{A_1B_1}) + E_Q(\rho_{A_2B_2}).\nonumber
\end{align}
Let $\rho_{ABV}$ be an extension of $\rho_{AB}$.  Now let $V_1\equiv A_2V$ and $V_2\equiv B_1V$ and consider the extensions $\rho_{A_1B_1V_1}$ and $\rho_{A_2B_2V_2}$ of $\rho_{A_1B_1}$ and $\rho_{A_2B_2}$.  Since $\rho_{A} = \rho_{A_1}\otimes\rho_{A_2}$ and $\rho_{B} = \rho_{B_1}\otimes\rho_{B_2}$, we have
\begin{multline}
f^{Q}(\rho_{A_1B_1V_1}) + f^{Q}(\rho_{A_2B_2V_2})
\\= (S_{A_1} + S_{B_1} + S_{A_1V_1} - S_{B_1V_1}) 
\\+ (S_{A_2} + S_{B_2} + S_{A_2V_2} - S_{B_2V_2})
\\= S_A + S_B + S_{AV} - S_{B_1A_2V} + S_{B_1A_2V} - S_{BV}
\\=  S_A + S_B + S_{AV} - S_{BV} = f^Q(\rho_{ABV}).\nonumber
\end{multline}

Now we show that
\begin{align}
E_Q(\rho_{A_1B_1}) + E_Q(\rho_{A_2B_2})\geq E_Q(\rho_{A_1B_1}\otimes\rho_{A_2B_2}).\nonumber
\end{align}
Let $\rho_{A_1B_1V_1}$ and $\rho_{A_2B_2V_2}$ be extensions of $\rho_{A_1B_1}$ and $\rho_{A_2B_2}$, and consider the extension of $\rho_{AB}$ given by
\begin{align}
\rho_{ABV} = \rho_{A_1B_1V_1}\otimes \rho_{A_2B_2V_2},
\end{align}
with $V \equiv V_1V_2$. Now using the fact that $\rho_{AV} = \rho_{A_1V_1}\otimes\rho_{A_2V_2}$ and $\rho_{BV} = \rho_{B_1V_1}\otimes\rho_{B_2V_2}$ we have
\begin{multline}
f^Q(\rho_{ABV}) = S_A + S_B + S_{AV} - S_{BV}
\\=S_{A_1} + S_{A_2} + S_{B_1} + S_{B_2} + (S_{A_1V_1} + S_{A_2V_2}) - (S_{B_1V_1} + S_{B_2V_2}) 
\\= (S_{A_1} + S_{B_1} + S_{A_1V_1} - S_{B_1V_1}) + (S_{A_2} + S_{B_2} + S_{A_2V_2} - S_{B_2V_2})
\\= f^{Q}(\rho_{A_1B_1V_1})+ f^Q(\rho_{A_2B_2V_2}).\nonumber
\end{multline}
Therefore $E_Q$ is additive.

Now we wish to prove the same thing for $E_R$, i.e.
\begin{align}
E_R(\rho_{A_1B_1}\otimes\rho_{A_2B_2})= E_R(\rho_{A_1B_1}) + E_R(\rho_{A_2B_2}).\nonumber
\end{align}
First we show that 
\begin{align}
E_R(\rho_{A_1B_1}\otimes\rho_{A_2B_2})\geq E_R(\rho_{A_1B_1}) + E_R(\rho_{A_2B_2}).\nonumber
\end{align}
Let $\rho_{ABV}$ be an extension of $\rho_{AB}$.  Now let $V_1 \equiv A_2V$ and $V_2 \equiv A_1V$ (see \cite{PhysRevLett.118.040501}) and consider the extensions $\rho_{A_1B_1V_1}$ and $\rho_{A_2B_2V_2}$ of $\rho_{A_1B_1}$ and $\rho_{A_2B_2}$. We wish to show that $f^R(\rho_{ABV})\geq f^R(\rho_{A_1B_1V_1}) + f^R(\rho_{A_2B_2V_2}),$ i.e.
\begin{multline}
S_{A_1A_2B_1B_2} + 2S_{A_1A_2V} - S_{A_1A_2B_1B_2V} - S_{V}\\\geq(S_{A_1B_1} + 2S_{A_1VA_2} - S_{A_1B_1VA_2} - S_{VA_2})
\\+(S_{A_2B_2} + 2S_{A_2VA_1} - S_{A_2B_2VA_1} - S_{VA_1}).\label{stuff}
\end{multline}
Note that inequality (\ref{stuff}) is equivalent to
\begin{align}
I_{A_1:A_2|V} + I_{B_1:B_2|A_1A_2V} \geq I_{A_1B_1:A_2B_2}.\label{stuuff}
\end{align}
Since $I_{A_1B_1:A_2B_2}=0$ by assumption, (\ref{stuuff}) is true by SSA.

Now we show that
\begin{align}
E_R(\rho_{A_1B_1}) + E_R(\rho_{A_2B_2})\geq E_R(\rho_{A_1B_1}\otimes\rho_{A_2B_2}).\nonumber
\end{align}
Let $\rho_{A_1B_1V_1}$ and $\rho_{A_2B_2V_2}$ be extensions of $\rho_{A_1B_1}$ and $\rho_{A_2B_2}$, and as we did with $E_Q$, consider the extension of $\rho_{AB}$ given by
\begin{align}
\rho_{ABV} = \rho_{A_1B_1V_1}\otimes \rho_{A_2B_2V_2},
\end{align}
with $V\equiv V_1V_2$.
Now using the fact that systems 1 and 2 are in a product state, we have
\begin{multline}
f^R(\rho_{ABV}) = S_{AB} + 2S_{AV} - S_{ABV} - S_V
\\=S_{A_1B_1} + S_{A_2B_2} + 2(S_{A_1V_1} + S_{A_2V_2}) 
\\- (S_{A_1B_1V_1} + S_{A_2B_2V_2}) - (S_{V_1} + S_{V_2})
\\= S_{A_1B_1} + 2S_{A_1V_1} - S_{A_1B_1V_1} - S_{V_1}
\\+ S_{A_2B_2} + 2S_{A_2V_2} - S_{A_2B_2V_2} - S_{V_2}
\\= f^{R}(\rho_{A_1B_1V_1}) + f^R(\rho_{A_2B_2V_2}).\nonumber
\end{multline}
Therefore $E_R$ is also additive.
\end{proof}

\subsection{Relationship to regularized $E_P$}

The regularized $E_P$, defined as
\begin{align}
E_P^\infty(\rho_{AB}) = \lim_{n\to\infty}\frac{1}{n}E_P(\rho_{AB}^{\otimes n}),
\end{align}
has an important operational interpretation.  $E_P^\infty$ is the number of EPR pairs required to create $\rho_{AB}$ using only local operations and asymptotically vanishing communication \cite{doi:10.1063/1.1498001}.  In general, $E_P^\infty$ is difficult to calculate.  But there is a relationship between $E_P^\infty$ and the quantities $E_Q$ and $E_R$, which may provide a way to learn about $E_P^\infty$.

\begin{theorem}
\begin{align}
E(\rho_{AB})\leq E_P^\infty(\rho_{AB}),\nonumber
\end{align}
for $E=E_Q,E_R$.
\end{theorem}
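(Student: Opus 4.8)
The plan is to reduce the asymptotic statement to a single-copy comparison and then feed in additivity. Concretely, I would first prove that $E_Q(\sigma_{AB}) \le E_P(\sigma_{AB})$ and $E_R(\sigma_{AB}) \le E_P(\sigma_{AB})$ for \emph{every} bipartite state $\sigma_{AB}$, and then apply this to $\sigma_{AB} = \rho_{AB}^{\otimes n}$.

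For the single-copy comparison, the key is a pointwise inequality between the entropic functions: $f^{Q}(\sigma_{ABV}) \le f^{P}(\sigma_{ABV})$ and $f^{R}(\sigma_{ABV}) \le f^{P}(\sigma_{ABV})$ for any extension $\sigma_{ABV}$. The first is $\tfrac{1}{2}(S_A + S_B + S_{AV} - S_{BV}) \le S_{AV}$, i.e. $S_A + S_B \le S_{AV} + S_{BV}$, which is weak monotonicity (WM); the second is $\tfrac{1}{2}(S_{AB} + 2S_{AV} - S_{ABV} - S_V) \le S_{AV}$, i.e. $S_{AB} \le S_{ABV} + S_V$, which is the Araki--Lieb triangle inequality for the cut $(AB)\,{:}\,V$. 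Since $E_P$, $E_Q$, $E_R$ are each an infimum of $f^{P}$, $f^{Q}$, $f^{R}$ over the same set of extensions, the pointwise inequalities immediately give $E_Q(\sigma_{AB}) \le E_P(\sigma_{AB})$ and $E_R(\sigma_{AB}) \le E_P(\sigma_{AB})$ for all $\sigma_{AB}$.

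Next I would take $\sigma_{AB} = \rho_{AB}^{\otimes n}$ to get $E(\rho_{AB}^{\otimes n}) \le E_P(\rho_{AB}^{\otimes n})$ for $E \in \{E_Q, E_R\}$, divide by $n$, and use additivity (Theorem~\ref{Add}), $E(\rho_{AB}^{\otimes n}) = n\,E(\rho_{AB})$, to obtain $E(\rho_{AB}) \le \tfrac{1}{n} E_P(\rho_{AB}^{\otimes n})$ for every $n$. Taking $n \to \infty$ then gives $E(\rho_{AB}) \le E_P^{\infty}(\rho_{AB})$.

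No step here is a serious obstacle; the only subtlety worth flagging is that one cannot shortcut the argument using $E \le E_P$ alone, because $E_P$ is subadditive and hence $E_P^{\infty} \le E_P$, so both bounds run the wrong way — it is genuinely the additivity of $E_Q$ and $E_R$ (Theorem~\ref{Add}) that closes the gap. As a bonus, this yields a computable lower bound on the operationally meaningful but hard-to-evaluate quantity $E_P^{\infty}$.
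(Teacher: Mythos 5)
Your proposal is correct and follows essentially the same route as the paper: establish the pointwise inequalities $f^{Q}\leq f^{P}$ and $f^{R}\leq f^{P}$ (the paper attributes both to weak monotonicity, while you identify the second as Araki--Lieb, an equivalent statement), conclude $E_Q, E_R \leq E_P$ on every state, and then invoke additivity on $\rho_{AB}^{\otimes n}$ before taking the limit. Your closing remark about why additivity is genuinely needed matches the logic of the paper's argument exactly.
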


\begin{proof}
First note that 
\begin{align}
f^{Q} - f^{P} &= \frac{1}{2}(S_A + S_B - S_{AV} - S_{BV}) \nonumber
\\&= -\frac{1}{2}(S_{V|A} + S_{V|B})\nonumber
\\&\leq 0\nonumber
\end{align}
and
\begin{align}
f^{R} - f^{P} &= \frac{1}{2}(S_{AB} - S_V - S_{ABV})\nonumber
\\&= -\frac{1}{2}(S_V + S_{V|AB})\nonumber
\\&\leq 0,\nonumber
\end{align}
where both inequalities follow from WM.  Therefore $E_P$ is lower bounded by both $E_Q$ and $E_R$.  Additivity of $E_Q$ and $E_R$ (Thm. \ref{Add}) now allows us to write
\begin{align}
E(\rho_{AB}) = \frac{1}{n}E(\rho_{AB}^{\otimes n})\leq\frac{1}{n}E_P(\rho_{AB}^{\otimes n})~~\forall n,\nonumber
\end{align}
for $E = E_Q,E_R$. Taking the $n\to\infty$ limit gives the theorem.
\end{proof}

\section{Evaluation of $E_P$, $E_Q$ and $E_R$}

\subsection{Pure state}

For any pure state $\ket\psi_{AB}$, all extensions are of the form $\rho_{ABV} = \dyad{\psi}_{AB}\otimes\rho_V$, which makes calculation of $E_P$, $E_Q$, and $E_R$ trivial:
\begin{align}
E_P(\ket\psi_{AB}) &= \inf_{\rho_{ABV}}S_{AV} = \inf_{\rho_{ABV}}(S_A + S_V) = S_A\nonumber
\\E_Q(\ket\psi_{AB})&= \frac{1}{2}\inf_{\rho_{ABV}}(S_A + S_B + S_{AV} - S_{BV})\nonumber
\\&= S_A + \frac{1}{2}\inf_{\rho_{ABV}}(S_A + S_V - S_B - S_V)\nonumber
\\&=S_A\nonumber
\\E_R(\ket\psi_{AB})&=\frac{1}{2}\inf_{\rho_{ABV}}(S_{AB} + 2S_{AV} - S_{ABV} - S_V)\nonumber
\\&=\frac{1}{2}\inf_{\rho_{ABV}}(2(S_A + S_V) - (S_{AB} + S_V) - S_V)\nonumber
\\&=S_A.\nonumber
\end{align}
So for a pure state, $E_P = E_Q = E_R = S_A = S_B$.

\subsection{Classically correlated state}

We can also evaluate all three correlation measures for the classically correlated state
\begin{align}
\rho_{AB}^c = \sum_ip_i\dyad{ii}_{AB}.\nonumber
\end{align}
First note that an arbitrary extension of $\rho_{AB}^c$ takes the form 
\begin{align}
\rho_{ABV}^c=\sum_{i,j}\sqrt{p_ip_j}\dyad{ii}{jj}\otimes\rho_V^{ij},\nonumber
\end{align}
with $\text{Tr}~\rho_V^{ij} = \delta_{ij}$.
From this we can see that $\rho_{AV}^c = \sum_ip_i\dyad i_A\otimes\rho_V^{ii}$, and $\rho_{BV}^c = \sum_ip_i\dyad i_B\otimes\rho_V^{ii}$. 
\subsubsection{$E_P$} Since $S_{V|A} = \Sigma_ip_iS(\rho_V^{ii})\geq 0,$ we have that $f^P = S_A + S_{V|A}\geq S_A$, which is saturated by the trivial extension.  Therefore $E_P(\rho_{AB}^c) = S_A$.

\subsubsection{$E_Q$} From the form of $\rho_{AV}^c$ and $\rho_{BV}^c$, we see that $S_{AV} = S_{BV}$ for any extension of $\rho_{AB}^c$, so $f^{Q} = \frac{1}{2}(S_A + S_B + S_{AV} - S_{BV}) = S_A$, so $E_Q(\rho_{AB}^c) = S_A = H(\{p_i\})$.

\subsubsection{$E_R$} By (\ref{bounds}), $E_R\geq \frac{1}{2}I_{A:B} = \frac{1}{2}H(\{p_i\})$ for the state $\rho_{AB}^c$.  It is easy to check that this value is achieved by the extension
\begin{align}
\rho_{ABV}^c = \sum_ip_i\dyad{iii}_{ABV},\nonumber
\end{align}
so $E_R(\rho_{AB}^c) = \frac{1}{2}I_{A:B} = \frac{1}{2}H(\{p_i\})$.

\subsection{Symmetric or antisymmetric state}

For states with support entirely within the symmetric or antisymmetric subspace, we have

\subsubsection{$E_P$} In this case, \cite{christandl2005uncertainty} showed that $E_P = S_A$  and that $E_P$ is additive.
\subsubsection{$E_Q$} States $\rho_{ABV}$ with the reduced state on $\rho_{AB}$ supported within the symmetric or antisymmetric subspace are invariant under the swap operator $F_{AB} = \sum_{ij}\dyad{ij}{ji}_{AB}$, i.e.
\begin{align}
\rho_{BAV} = (F_{AB}\otimes\mathbb{I}_V)\rho_{ABV}(F_{AB}^\dagger\otimes\mathbb{I}_V) = \rho_{ABV}.\nonumber
\end{align}

To see this, note that the most general pure state $\ket\psi_{ABV}$ for which $\text{Tr}_V\dyad\psi_{ABV}$ is supported entirely in the symmetric or antisymmetric subspace of $\mathcal{H}_A\otimes\mathcal{H}_B$ is of the form
\begin{align}
\ket\psi_{ABV} = \sum_i\ket{\phi_i}_{AB}\ket{\xi_i}_C,\label{symm}
\end{align}
where all $\ket{\phi_i}_{AB}$ are symmetric or all $\ket{\phi_i}_{AB}$ are antisymmetric.  In the former case we have 
\begin{align}
F_{AB}\ket\psi_{ABV} = \sum_i\left(F_{AB}\ket{\phi_i}_{AB}\right)\ket{\xi_i}_C = \ket\psi_{ABV},\nonumber
\end{align}
while in the latter case we have
\begin{align}
F_{AB}\ket\psi_{ABV} = \sum_i\left(F_{AB}\ket{\phi_i}_{AB}\right)\ket{\xi_i}_C = -\ket\psi_{ABV}.\nonumber
\end{align}
In both cases, we see that 
\begin{align}
F_{AB}\dyad\psi_{ABV}F_{AB}^\dagger = \dyad\psi_{ABV}.\nonumber
\end{align}
Since $\rho_{ABV}$ with symmetric or antisymmetric $\rho_{AB}$ is generally a mixture of states of the form (\ref{symm}), invariance under $F_{AB}$ follows.  In particular, this means that for any extension $\rho_{ABV}$ of a symmetric or antisymmetric state $\rho_{AB}$, we have $S_{AV} = S_{BV}$.  Therefore $f^{Q} = S_A$, so $E_Q(\rho_{AB}) = S_A$.

\subsubsection{$E_R$} Again, for any extension $\rho_{ABV}$, we have $S_{AV} = S_{BV}$, so
\begin{align}
f^{R} &= \frac{1}{2}(S_{AB} + (S_{AV} + S_{BV}) - S_V - S_{ABV})\nonumber
\\&= \frac{1}{2}(S_{AB} + I_{A:B|V}).\nonumber
\end{align}
Therefore $E_R(\rho_{AB}) = \frac{1}{2}S_{AB} + E_{sq}(\rho_{AB})$.

\section{Conclusion}

We have identified four quantities of the form (\ref{form2}) which have the properties of monotonicity and the lower and upper bounds in (\ref{bounds}).  Two of them, the entanglement of purification and the squashed entanglement, have appeared in the literature \cite{doi:10.1063/1.1498001}\cite{doi:10.1063/1.1666274}\cite{doi:10.1063/1.1643788} and have been thoroughly studied.  The other two, which we have called $E_Q$ and $E_R$, are new.  We have shown that these two quantities are additive. We have also shown that they are lower bounds for $E_P^\infty$, which could potentially provide a calculational handle for $E_P^\infty$.  

It should be noted that the search method used in this paper is not exhaustive.  In particular, we are restricted to showing monotonicity by (\ref{ineq0}) and (\ref{ineq1}), while there may be other inequalities which imply monotonicity of a optimized correlation measure. A possible future research thrust is identifying such new correlation measures, or confirming their non-existence. Additionally, we have not yet found operational interpretations for the correlation measures $E_Q$ and $E_R$ identified in this paper.  The examples presented in Section VI suggest that while $E_Q$ and $E_R$ capture both classical and quantum correlation, $E_R$ distinguishes between the two, whereas $E_Q$ seems not to tell them apart. Finally, we have considered only bipartite correlations in this work, but we expect that a search for optimized multipartite correlation measures may yield new and exciting formulas to study and interpret. Understanding the two new correlation measures identified here, as well as other possibly new optimized correlation measures (both bipartite and multipartite) will be a step forward towards the broader goal of understanding the structures of quantum correlations.


%



\section*{Acknowledgment}

We would like to thank Mohammad Alhejji, Michael Perlin, and the anonymous referees for useful conversations and insightful comments.  This work was supported by NSF CAREER award CCF 1652560.

\ifCLASSOPTIONcaptionsoff
  \newpage
\fi



%
\bibliography{\jobname}
\bibliographystyle{ieeetr}

%


\begin{IEEEbiographynophoto}{Graeme Smith}
Graeme Smith has been an assistant professor of physics at the University of Colorado Boulder, and an associate fellow at JILA since 2016.  He has recently been awarded an NSF CAREER award.  Graeme was previously a research staff member (2010-2016) and postdoc (2007-2010) at  IBM’s TJ Watson Research Center.  He was a research associate in computer science at the University of Bristol from 2006-2007, and received a PhD in theoretical physics from Caltech in 2006.  Graeme is interested in computation and communication in noisy settings, quantum Shannon theory, network information theory, coding theory, cryptography, quantum estimation and detection, optical communications, and the physics of information.
\end{IEEEbiographynophoto}

\begin{IEEEbiographynophoto}{Joshua Levin}
Joshua Levin has been a graduate student in the department of physics at University of Colorado Boulder since 2016.  He received a BA in mathematics from the University of Michigan Ann Arbor in 2011, and an MS in Physics from the University of Colorado Boulder in 2019.  Joshua is interested in the implications of quantum information and correlation theory in quantum gravity, as well as network information theory and quantum Shannon theory. 
\end{IEEEbiographynophoto}






\end{document}